\newtheorem{proposition}{Proposition}[section]
\newtheorem{assumption}{Assumption}[section]
\newtheorem{remark}{Remark}[section]
\journal{a Journal}
\begin{document}

\begin{frontmatter}



\title{Structured physics--guided neural networks for electromagnetic commutation applied to industrial linear motors\tnoteref{label10}}
\tnotetext[label10]{This work is supported by the NWO research project PGN Mechatronics, project number 17973.}

\author[labelTUE]{Max Bolderman\corref{cor1}}
\ead{m.bolderman@tue.nl}
\author[labelTUE]{Mircea Lazar}
\author[labelTUE,labelASML]{Hans Butler}

\cortext[cor1]{Corresponding author.}

\affiliation[labelTUE]{organization={Eindhoven University of Technology},addressline={Groene Loper 19},city={Eindhoven},postcode={5612 AP},country={The Netherlands}}
\affiliation[labelASML]{organization={ASML},
             addressline={De Run 6501},
             city={Veldhoven},
             postcode={5504 DR},
             country={The Netherlands}}

\begin{abstract}
Mechatronic systems are described by an interconnection of the electromagnetic part, i.e., a static position--dependent nonlinear relation between currents and forces, and the mechanical part, i.e., a dynamic relation from forces to position. 
Commutation inverts a model of the electromagnetic part of the system, and thereby removes the electromagnetic part from the position control problem.
Typical commutation algorithms rely on simplified models derived from physics--based knowledge, which do not take into account position dependent parasitic effects.
In turn, these commutation related model errors translate into position tracking errors, which limit the system performance.
Therefore, in this work, we develop a data--driven approach to commutation using physics--guided neural networks (PGNNs).
A novel PGNN model is proposed which structures neural networks (NNs) to learn specific motor dependent parasitic effects. 
The PGNN is used to identify a model of the electromagnetic part using force measurements, after which it is analytically inverted to obtain a PGNN--based commutation algorithm. 
Motivated by industrial applications, we develop an input transformation to deal with systems with fixed commutation, i.e., when the currents cannot be controlled. 
Real--life experiments on an industrial coreless linear motor (CLM) demonstrate a factor $10$ improvement in the commutation error in driving direction and a factor $4$ improvement in the position error with respect to classical commutation in terms of the mean--squared error (MSE). 
\end{abstract}



\begin{keyword}
Mechatronic systems \sep Commutation \sep Electromagnetics \sep Neural networks \sep Motion control systems


\end{keyword}

\end{frontmatter}


\section{Introduction}
\label{sec:Introduction}
The increasing demands on throughput and accuracy within high--precision motion systems require innovative solutions from a motion control perspective, see, e.g.,~\cite{Schmidt2014}. 
These motion systems are described as an interconnection between: $i)$ the electromagnetic part, i.e., the nonlinear position--dependent static relation between the coil currents and actuation forces, and $ii)$ the mechanical part, i.e., the dynamics from the actuation forces to the position output~\cite{Nguyen2018}. 
The commutation algorithm inverts the electromagnetic part of the system, and thereby removes the electromagnetic part from the position control problem. 
Therefore, commutation has similarities to feedback linearization~\cite{Samuelsson2005}, and is also known as control allocation~\cite{Johansen2013} or current waveform optimization~\cite{Rohrig2003,Moehle2015}.
Often, the system is overactuated, i.e., the number of independent coil currents exceeds the number of actuation forces. 
This overactuation is used in the commutation algorithm by, e.g., minimizing the energy consumption~\cite{Nguyen2016} or not to excite flexible body modes in the translator~\cite{Custers2019}.

Commutation relies on the inversion of a model that describes the electromagnetic part. 
Conventionally, this model is obtained from first principle knowledge, which, after inversion, yields the sinusoidal three--phase current waveforms~\cite{Gieras2011}. 
This has similarities to the $dq0$--reference frame often used in rotary systems~\cite{Overboom2015}. 
Unfortunately, this physics--based approach has limited accuracy, as it fails to describe position dependent parasitic effects present in real--life mechatronics systems. 
For example, it does not include manufacturing tolerances, such as mismatches in orientation and field intensity of the permanent magnets~\cite{Nguyen2018}.
As a result, we observe significant differences between the desired and generated forces, i.e., the commutation error. 
These errors reduce the reference tracking performance, and thereby degrade the overall system performance. 

With the aim to enhance commutation performance, several methods have been proposed.
In~\cite{Broens2023}, a feedback control loop is designed to steer an offset on the commutation phase. 
In~\cite{Strijbosch2019},~\cite{Aarnoudse2023}, an iterative learning approach has been developed that learns a current waveform iteratively, i.e., by learning from past trials. 
Alternatively, several methods have been developed that rely on identifying more accurately the electromagnetic part or the commutation waveforms. 
A position--dependent compensation scheme is developed in~\cite{Bascetta2010} to compensate for the force ripple caused by imperfect commutation.
A Fourier model is developed in~\cite{Rohrig2008} to shape the current waveforms in the commutation, and~\cite{Nguyen2018} develops a Fourier model for identification of the electromagnetic part.
In~\cite{Meer2022}, Gaussian Processes are used to parametrize the commutation waveforms for a switched reluctance motor. 

Motivated by the successful application of physics--guided neural networks (PGNNs) to feedforward control for mechatronic systems~\cite{Bolderman2021}, in this work, we develop a data--driven approach to commutation using PGNNs for a coreless linear motor (CLM).  
PGNNs effectively merge a known, physics--based model with a black box neural network (NN), and thereby retain the physical interpretability with improved accuracy compared to the stand--alone physics--based model~\cite{Bolderman2024}. 
This suits the commutation problem, where a simplified physics--based model used in classical commutation is known and can be embedded in a PGNN model architecture. 

The contributions of this paper are as follows.
Firstly, we develop a novel, invertible, PGNN model class for describing the electromagnetic part of a CLM by structuring neural networks (NNs) to learn motor dependent parasitic effects. The PGNN parameters are then identified using a data set that utilizes force measurements. By means of an optimal parameter selection, it is demonstrated that the PGNN better fits the data compared to the classical physics--based model.
Secondly, we compute a commutation law by analytically inverting the identified PGNN while minimizing energy consumption.
Then, motivated by common limitations in industrial applications, we design a transformation to control the commutation magnitude and phase instead of directly prescribing the currents.
Real--life experiments on an industrial CLM demonstrate a reduction of the commutation error in driving direction by a factor $10$ on a representative position reference in terms of the mean--squared error (MSE) with respect to classical commutation. 
Moreover, the MSE of the position tracking error reduces by a factor $4$.


\begin{figure}
\centering
\includegraphics[width=1.0\linewidth]{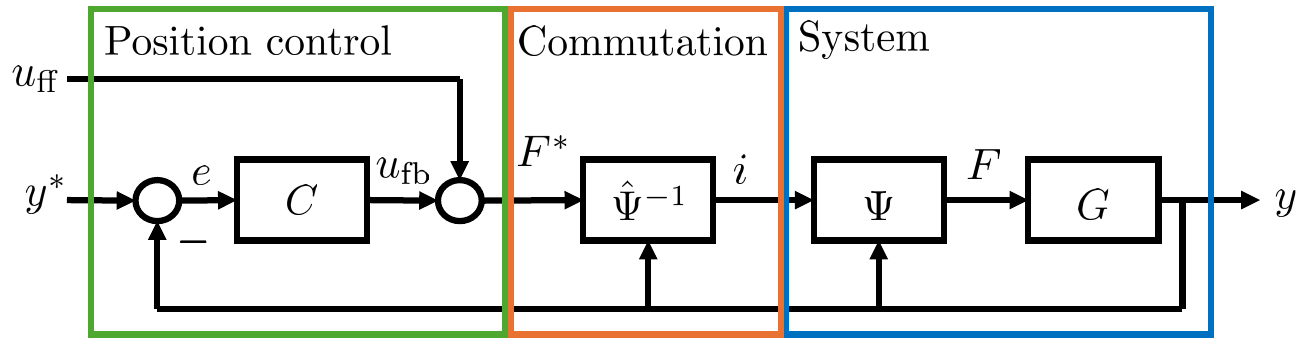}
\caption{Schematic overview of the closed--loop control architecture.}
\label{fig:Schematic_Overview}
\end{figure}

\section{Preliminaries}
\label{sec:Preliminaries}
\subsection{Notation}
We denote $\mathbb{R}$ and $\mathbb{Z}$ as the set of real and integer numbers, respectively. 
The positions at time $t \in \mathbb{R}_{>0}$ are denoted by $y(t) \in \mathbb{R}^{n_y}$, $n_y \in \mathbb{Z}_{>0}$, the forces acting on the translator are $F(t) \in \mathbb{R}^{n_F}$, $n_F \in \mathbb{Z}_{>0}$, and $i(t) \in \mathbb{R}^{3L}$ are the three--phase currents passing through each of the $L \in \mathbb{Z}_{>0}$ coil sets. 
A superscript $\cdot^*$ indicates a desired value, e.g., $y^*(t)$ is the desired position $y(t)$, and $F^*(t)$ are the desired forces $F(t)$.
A superscript $l \in [1, ..., L]$ refers to coil set $l$, e.g., $i^l(t) = [i_a^l(t), i_b^l(t), i_c^l(t)]^T$ are the three--phase currents passing through coil set $l$, and $F^l(t)$ are the forces generated by coil set $l$. 
A hat $\hat{\cdot}$ denotes a model or the output predicted by a model, e.g., $\hat{\Psi}$ is a model of electromagnetic part $\Psi$ and $\hat{F}(t)$ is a model--based prediction of the force $F(t)$.

\subsection{Closed--loop control architecture}
We start by introducing the standard closed--loop control architecture for high--precision mechatronic systems illustrated in Fig.~\ref{fig:Schematic_Overview}.
The mechanical part of the system, denoted by $G$, describes the dynamical behaviour of the system from force $F(t)$ to position $y(t)$.  
The mechanical dynamics are preceded by the electromagnetic part, which are a  static, position--dependent nonlinearity according to
\begin{equation}
\label{eq:Electromagnetics}
	F(t) = \Psi \big( i(t), y(t) \big),
\end{equation}
where $i(t) \in \mathbb{R}^{3 L}$ are the three--phase currents passing through the $L \in \mathbb{Z}_{>0}$ coil sets, and $\Psi : \mathbb{R}^{3 L} \times \mathbb{R}^{n_y} \rightarrow \mathbb{R}^{n_F}$ the electromagnetic part. 
The commutation aims to prescribe the currents $i(t)$ that generate a desired force $F^*(t)$ by inverting the electromagnetic part~\eqref{eq:Electromagnetics}, such that
\begin{equation}
\label{eq:Commutation}
	i (t) = \hat{\Psi}^{-1} \big( F^*(t), y(t) \big),
\end{equation}
where $\hat{\Psi}^{-1} : \mathbb{R}^{n_F} \times \mathbb{R}^{n_y} \rightarrow \mathbb{R}^{3L}$ is a model of the inverse of~$\Psi$.
The desired force is generated by the position control loop, which aims to minimize the tracking error $e(t) := y^*(t) - y(t)$ as
\begin{equation}
\label{eq:Position_Control}
	F^*(t) = u_{\textup{ff}}(t) + u_{\textup{fb}}(t),
\end{equation}
where $u_{\textup{ff}}(t) \in \mathbb{R}^{n_F}$ is the feedforward, and $u_{\textup{fb}}(t) \in \mathbb{R}^{n_F}$ the feedback input. 

Ideally, the currents prescribed by the commutation~\eqref{eq:Commutation} generate the desired forces when supplied to the system, such that
\begin{equation}
\label{eq:Commutation_Objective}
	F(t) = \Psi \Big( \hat{\Psi}^{-1} \big( F^*(t), y(t) \big), y(t) \Big) = F^*(t).
\end{equation}

\begin{figure}
\centering
\includegraphics[width=1.0\linewidth]{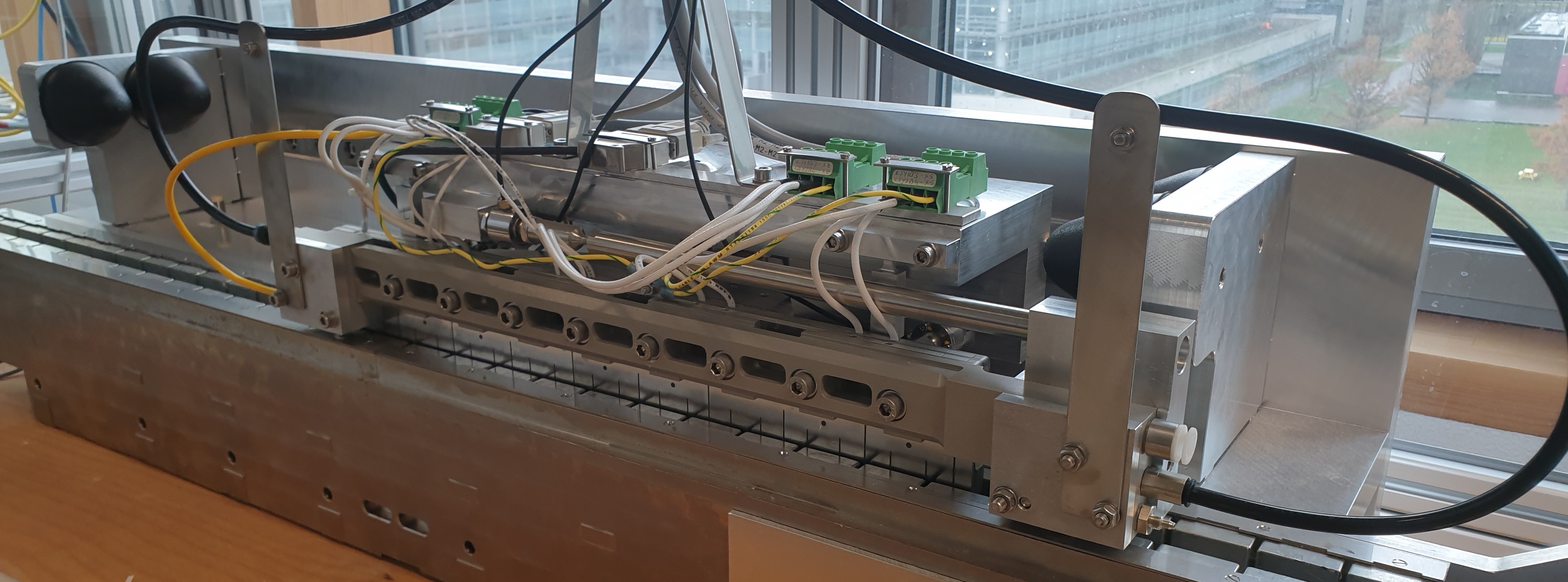}
\caption{Industrial coreless linear motor.}
\label{fig:CLM}
\end{figure}
\begin{figure}
\centering
\includegraphics[width=1.0\linewidth]{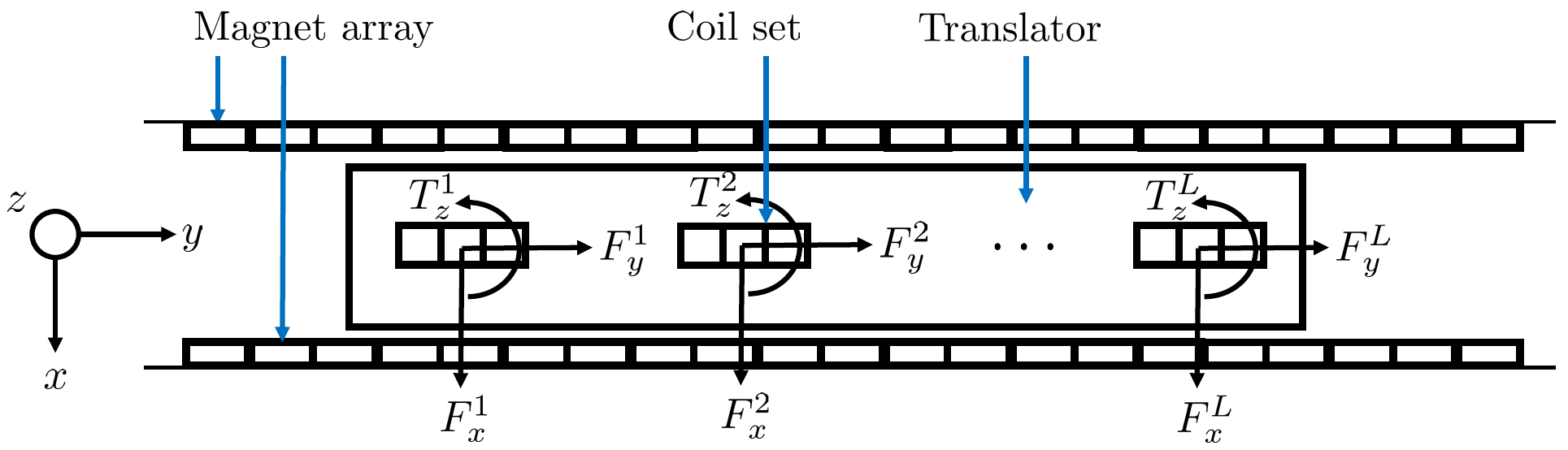}
\caption{Schematic top--view of the CLM.}
\label{fig:Schematic_Overview_CLM}
\end{figure}

The main focus in this work is the industrial CLM in Fig.~\ref{fig:CLM}, of which a schematic top--view is illustrated in Fig.~\ref{fig:Schematic_Overview_CLM}. 
However, the developed methodology can be applied to other type of actuators as well.
The CLM translates in $y$ direction and, by design, can produce forces in $y$ and $x$ direction and a torque in $z$ direction.
Forces and torques in other directions are negligible. 
Therefore, we consider the $2D$ plane in Fig.~\ref{fig:Schematic_Overview_CLM}.
As a consequence, we have $n_y = 1$ position output, and $n_F = 3$ forces $F(t) = [F_y(t), F_x(t), T_z(t)]^T$, with $F_y(t)$ the force in driving direction $y$, and $F_x(t)$ and $T_z(t)$ the out--of--plane force and torque, respectively. 
In addition, each coil set is in so--called star configuration, such that 
\begin{equation}
\label{eq:Star_Configuration}
	i_a^l(t) + i_b^l(t) + i_c^l(t) = 0, \quad l = 1, ..., L. 
\end{equation}
This constraint can be removed by reducing the number of independent inputs per coil set from $3$ to $2$, such that $i(t) \in \mathbb{R}^{2L}$. 
Note that, we have $2L$ independent inputs, which implies \emph{underactuation} of the electromagnetic part when $L=1$ and \emph{overactuation} when $L>1$. The extra $2L-3$ inputs can be used to minimize energy consumption or, e.g., not to excite flexible body modes of the translator~\cite{Custers2019}.


\subsection{Classical commutation}
We continue to derive the classical commutation solution.
\begin{assumption}
\label{ass:Lorentz_Actuation}
	The electromagnetic part follows the Lorentz princple, such that the generated force $F^l(t)$ is linear in the currents~$i^l(t)$. 
\end{assumption}
As a consequence, the electromagnetic part of a single coil set is given as
\begin{equation}
\label{eq:Electromagnetics_SingleCoilset}
	F^l(t) = K^l \big( y(t) \big) i^l(t),
\end{equation}
where $K^l : \mathbb{R}^{n_y} \rightarrow \mathbb{R}^{3 \times 3}$ is the position dependent gain matrix. 
Ideally, the system is designed to have
\begin{align}
\begin{split}
\label{eq:Electromagnetics_GainMatrix}
	&  K^l \big( y(t) \big) = \\
	& \frac{2}{3} k^l \begin{bmatrix} \sin \big( \eta^l(t) \big) & \sin \big( \eta^l(t) + \frac{2 \pi}{3} \big) & \sin \big( \eta^l(t) - \frac{2 \pi}{3} \big) \\ \mu \cos \big( \eta^l(t) \big) & \mu\cos \big( \eta^l(t) + \frac{2 \pi}{3} \big) & \mu \cos \big( \eta^l(t) - \frac{ 2 \pi}{3} \big) \\ d^l \mu \cos \big( \eta^l(t) \big) & d^l \mu \cos \big( \eta^l(t) + \frac{ 2 \pi}{3} \big) & d^l \mu \cos \big( \eta^l(t) - \frac{ 2 \pi}{3} \big)  \end{bmatrix} ,
\end{split}
\end{align}
where $k^l \in \mathbb{R}_{>0}$ is the motor constant, $d^l \in \mathbb{R}$ is the relative position of coil set $l$ with respect to the centre of mass of the translator, $\mu \in \mathbb{R}$ the ratio of the motor constant that acts in orthogonal direction, and 
\begin{equation}
\label{eq:Electromagnetics_Phase}
	\eta^l(t) = \frac{2 \pi}{d_m} y(t) + \zeta^l
\end{equation}
is the commutation phase, with $\zeta^l \in \mathbb{R}$ the commutation phase offset, and $d_m \in \mathbb{R}_{>0}$ the magnetic pole pitch, i.e., the length of two magnets. 
\begin{assumption}
\label{ass:Decoupling_Coilsets}
	The currents in coil set $l$ do not affect the force generated by coil set $m$, $l \neq m$.
\end{assumption}
Then, the total force acting on the translator is computed as the sum of the forces generated by each coil set $l$, such that
\begin{align}
\begin{split}
\label{eq:Electromagnetics_Coilsets}
	F(t) & = \sum_l F^l(t) = K\big(y(t) \big) i(t) \\
	& := [K^1 \big( y(t) \big), ..., K^L \big( y(t) \big) ] \begin{bmatrix} i^1 (t) \\ \vdots \\ i^L(t) \end{bmatrix}.
\end{split}
\end{align}

Since there are infinitely many solutions for $i(t)$ in~\eqref{eq:Electromagnetics_Coilsets} that yield $F(t) = F^*(t)$ when $L>1$, we adopt the optimal commutation formulation from~\cite{Nguyen2018}, such that
\begin{align}
\begin{split}
\label{eq:Optimal_Commutation}
	i(t) & = \textup{arg} \min_{i(t)} \| i(t) \|_2^2, \\
	& \textup{subject to: } F(t) = F^*(t).
\end{split}
\end{align}
Note that, $\| i(t) \|_2^2$ is proportional to the dissipated power. 
Substitution of~\eqref{eq:Electromagnetics_Coilsets} in~\eqref{eq:Optimal_Commutation} yields the analytical commutation solution
\begin{equation}
\label{eq:Commutation_Solution_Ideal}
	i(t) = K\big( y(t) \big)^{\dagger} F^*(t). 
\end{equation}
Suppose that $F^* (t) = [F_y^*(t), 0, 0]^T$, i.e., the out--of--plane forces are zero.
Then, the commutation~\eqref{eq:Commutation_Solution_Ideal} reduces to
\begin{align}
\begin{split}
\label{eq:Commutation_Solution_Ideal_SingleCoilset}
	i^l(t) & = \hat{\Psi}^{l^{-1}} \left( {F_y^l}^*(t) , \hat{\eta}^l(t) \right)  := \begin{bmatrix} \sin \big( \hat{\eta}^l(t) \big) \\ \sin \big( \hat{\eta}^l(t) +\frac{2 \pi}{3} \big) \\ \sin \big(\hat{\eta}^l(t) - \frac{2 \pi}{3} \big) \end{bmatrix} \frac{{F_y^l}^*(t)}{\hat{k}^l} , \\
	{F_y^l}^*(t) & = \frac{\hat{k}^{l^2}}{\sum_{i=m}^{L} ( \hat{k}^{m} )^2} F_y^*, \quad l = 1, ..., L. 
\end{split}
\end{align}
In~\eqref{eq:Commutation_Solution_Ideal_SingleCoilset}, $\hat{k}^l$ is an estimate of $k^l$, and $\hat{\eta}^l(t) = \frac{2 \pi}{d_m} y(t) + \hat{\zeta}^l$ with $\hat{\zeta}^l$ an estimate of ${\zeta}^l$. 
The desired driving force for coil set $l$ is ${F_y^l}^*(t)$, which reduces to ${F_y^l}^*(t) = \frac{1}{L} F_y^*(t)$ when $\hat{k}^l = \hat{k}$ for all $l=1, ..., L$. 

In summary, provided that the electromagnetics follows~\eqref{eq:Electromagnetics_SingleCoilset} with $K^l \big( y(t) \big)$ as in~\eqref{eq:Electromagnetics_GainMatrix}, we obtain the commutation law~\eqref{eq:Commutation_Solution_Ideal} which reduces to~\eqref{eq:Commutation_Solution_Ideal_SingleCoilset} when only a force in driving direction is desired.
Since~\eqref{eq:Commutation_Solution_Ideal_SingleCoilset} requires estimates of $\hat{k}^l$ and $\hat{\zeta}^l$, $l = 1, ..., L$, we present a novel data--based calibration technique in the next section. 
Afterwards, we demonstrate by means of real--life experiments that relying on the simplified description of the electromagnetic part~\eqref{eq:Electromagnetics_SingleCoilset},~\eqref{eq:Electromagnetics_GainMatrix} yields structural commutation errors.

\subsection{Data--based calibration}
In order to obtain accurate estimates of $\hat{k}^l$ and $\hat{\zeta}^l$ and thereby facilitate a fair comparison of classical commutation~\eqref{eq:Commutation_Solution_Ideal_SingleCoilset} with the PGNN later, we develop a data--based calibration approach. 
To do so, we generate two data sets $Z_i^l$, $i = 1,2$, for each coil set $l$ by operating only the corresponding coil set while either adding or subtracting $\Delta \in \mathbb{R}$ to the commutation phase $\hat{\zeta}^l(t)$. 
Therefore, we use ${F_y^l}^*(t) = F_y^*(t)$ and ${F_y^m}^*(t) = 0$ for $m \neq l$ in~\eqref{eq:Commutation_Solution_Ideal_SingleCoilset}, such that $F(t) = F^l(t)$ from~\eqref{eq:Electromagnetics_Coilsets}.
As a result, we obtain the data sets
\begin{align}
\begin{split}
\label{eq:DataSets}
	Z_i^l & = \{\Delta, y(1), F(1), {F_y^l}^*(1), ..., y(N), F(N), {F_y^l}^*(N) \}, \\
	&\quad \quad \quad  \quad i = 1,2, \; l = 1, ..., L.
\end{split}
\end{align}
Note that, $\Delta$ is constant throughout the experiment. 

\begin{remark}
We require force measurements to obtain the data sets as in~\eqref{eq:DataSets}. 
In the absence of force measurements, it is possible to reconstruct forces in driving direction from position measurements and an inverse model of the dynamics $G$ in Fig.~\ref{fig:Schematic_Overview}. Alternatively, we can simultaneously identify a model of the electromagnetic part $\Psi$ and a model of the dynamics $G$ in driving direction.
\end{remark}

Then, starting from an initial estimate of $\hat{k}^l$ and $\hat{\zeta}^l$, we re--calibrate the parameters following Algorithm~\ref{alg:Calibration}.
For safety reasons, the extra commutation phase offset $| \Delta | \leq \frac{\pi}{4}$. 
\begin{algorithm}
\caption{Data--based calibration of motor constant $\hat{k}^l$ and commutation phase offset $\hat{\zeta}^l$.}
\label{alg:Calibration}
	\begin{algorithmic}
		\State \textbf{Initialize} $\hat{k}^l$ and $\hat{\zeta}^l$, $l = 1, ..., L$. 
		\For{ $l = 1, ..., L$}
			\State \textbf{Generate data}~\eqref{eq:DataSets} with $\hat{\zeta}_i^l = \hat{\zeta}^l + (-1)^i \Delta$, $i=1,2$.
			\State \textbf{Optimize} coefficients: 	
			\begin{equation}
			\label{eq:Ci}
				\hat{c}_i = \textup{arg} \min_{c_i} \sum_{t \in Z_i^l} \left\| F_y(t) - c_i F_y^*(t) \right\|_2^2, \quad i = 1,2.
			\end{equation}
			\State \textbf{Update} parameters:
			\begin{align}
			\begin{split}
			\label{eq:Update_Calibration}
				\hat{\zeta}^l & \leftarrow \hat{\zeta}_1^l + \tan^{-1} \left( \frac{\frac{\hat{c}_2}{\hat{c}_1} - \cos( 2 \Delta)}{\sin(2 \Delta)} \right),  \\
				\hat{k}^l & \leftarrow \frac{\hat{c}_1 \hat{k}^l}{\cos(\hat{\zeta}^l - \hat{\zeta}_1^l)}.
			\end{split}
			\end{align}
		\EndFor 
	\end{algorithmic}
\end{algorithm}

The procedure of Algorithm~\ref{alg:Calibration} is explained by observing that the generated forces for coil set~$l$ are computed by substituting~\eqref{eq:Commutation_Solution_Ideal_SingleCoilset} in~\eqref{eq:Electromagnetics_SingleCoilset} with $K^l \big( y(t) \big)$ in~\eqref{eq:Electromagnetics_GainMatrix}, such that
\begin{equation}
\label{eq:Generated_Forces_Ideal}
	F^l(t) = \frac{k^l}{\hat{k}^l} \begin{bmatrix} \cos (\zeta - \hat{\zeta}^l) \\ \sin(\zeta^l - \hat{\zeta}^l ) \\ d^l \sin (\zeta^l - \hat{\zeta}^l ) \end{bmatrix} {F_y^l}^*(t). 
\end{equation}
As a consequence, we observe that $\hat{c}_i = \frac{k^l}{\hat{k}^l} \cos (\zeta^l - \hat{\zeta}_i^l )$ in~\eqref{eq:Ci}. 
We rewrite $\hat{c}_2$ using the trigonometric formula for the cosine as
 \begin{align}
\begin{split}
\label{eq:Derivation_Calibration}
	\hat{c}_2 &  = \frac{k^l}{\hat{k}^l} \Big( \cos( \zeta^l - \hat{\zeta}_1^l) \cos( \hat{\zeta}_1^l - \hat{\zeta}_2^l ) \\
	& \quad \quad \quad \quad -  \sin( \zeta^l - \hat{\zeta}_1^l) \sin( \hat{\zeta}_1^l - \hat{\zeta}_2^l ) \Big).
\end{split}
\end{align}
Then, from $\hat{c}_1$ and $\hat{c}_2$ we compute $k^l$ and $\zeta^l$ as in~\eqref{eq:Update_Calibration} and use these values to update the estimates $\hat{k}^l$ and $\hat{\zeta}^l$. 
Table~\ref{tab:Parameters_CLM} lists the initial and calibrated values for the CLM.

\begin{figure}
\centering
\includegraphics[width=0.95\linewidth]{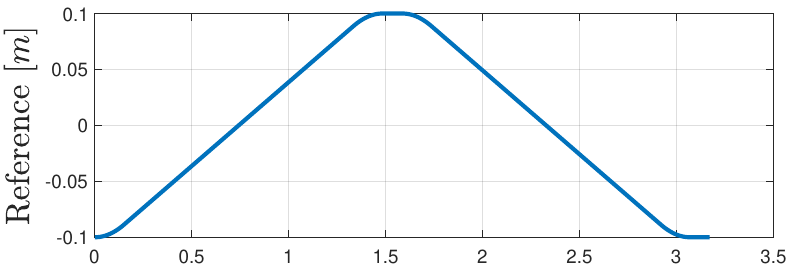}
\includegraphics[width=0.95\linewidth]{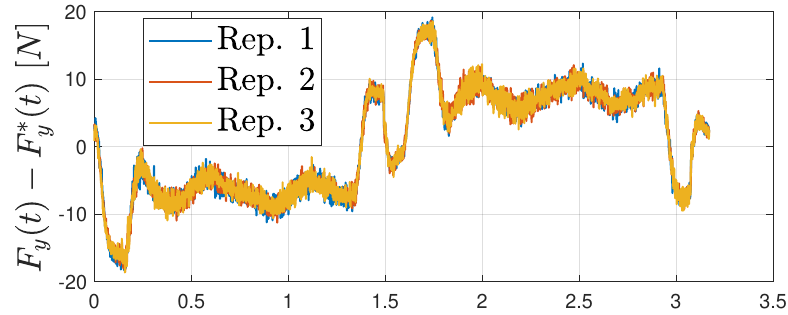}
\includegraphics[width=0.95\linewidth]{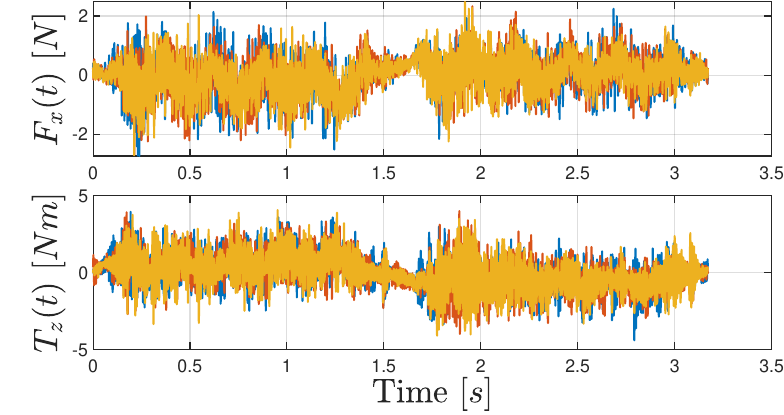}
\caption{Position reference $y^*(t)$ (top window), and the commutation error in driving direction $F_y(t) - F_y^*(t)$ measured on the CLM using commutation law~\eqref{eq:Commutation_Solution_Ideal} with data--based calibrated parameters.}
\label{fig:Commutation_Errors_Problem}
\end{figure}

\begin{table}
\centering
\caption{Initial and calibrated values of $\hat{k}^l$ and $\hat{\zeta}^l$.}
\label{tab:Parameters_CLM}
\begin{tabular}{c || c | c}
	\textbf{Parameter} & \textbf{Initial} & \textbf{Calibrated} \\ \hline \hline 
	$\hat{k}^1$ & & $61.34$ \\
	$\hat{k}^2$ & $67$ & $61.62$ \\ 
	$\hat{k}^3$ & & $60.07$ \\ \hline 
	$\hat{\zeta}^1$ & & $-0.54$ \\ 
	$\hat{\zeta}^2$ & $-0.52$ & $-0.55$ \\ 
	$\hat{\zeta}^3$ & & $-0.55$ \\ \hline
\end{tabular}
\end{table}

\section{Problem statement}
\label{sec:Problem_Statement}
Next, we demonstrate the inherent limitations of the commutation algorithm~\eqref{eq:Commutation_Solution_Ideal},~\eqref{eq:Commutation_Solution_Ideal_SingleCoilset} that is derived from the assumption that the electromagnetic part satisfies~\eqref{eq:Electromagnetics} with $K^l\big(y(t)\big)$ in~\eqref{eq:Electromagnetics_GainMatrix}. 
To facilitate a fair comparison, we use Algorithm~\ref{alg:Calibration} with $\Delta = \frac{\pi}{4}$ to calibrate $\hat{k}^l$ and $\hat{\zeta}^l$. 
The motor constants $\hat{k}^l $ and commutation phase offset $\hat{\zeta}^l $ are initialized from manufacturer specifications, and the initialization in~\cite{Rovers2002}, see Table~\ref{tab:Parameters_CLM} for the initial and calibrated values.

Fig.~\ref{fig:Commutation_Errors_Problem} shows the commutation error $F(t) - F^*(t)$ on the experimental CLM resulting from the standard commutation law~\eqref{eq:Commutation_Solution_Ideal_SingleCoilset} with calibrated parameters $\hat{k}^l$ and $\hat{\zeta}^l$. 
The desired force is $F^*(t) = [F_y^*(t), 0, 0]^T$, where $F_y^*(t)$ results from the position control loop.
We observe that, especially in driving direction, there is a significant commutation error which is highly reproducible.
This commutation error results from unknown and hard--to--model effects present in the electromagnetic part~\eqref{eq:Electromagnetics} that are not described by the ideal model~\eqref{eq:Electromagnetics_SingleCoilset},~\eqref{eq:Electromagnetics_GainMatrix} for which the commutation law~\eqref{eq:Commutation_Solution_Ideal_SingleCoilset} is the analytical solution. 

Therefore, in this work, we develop a PGNN--based model for identification of the electromagnetic part~\eqref{eq:Electromagnetics} with higher accuracy than the classical commutation approach.
This PGNN model is used afterwards to derive an analytical commutation law, which achieves higher performance compared to classical commutation~\eqref{eq:Commutation_Solution_Ideal_SingleCoilset}.
In addition, motivated by common limitation in industrial applications where the commutation law~$\hat{\Psi}^{l^{-1}}$ in~\eqref{eq:Commutation_Solution_Ideal_SingleCoilset} is fixed, we design an input transformation to control the commutation magnitude ${F_y^l}^*(t)$ and phase $\hat{\eta}^l(t)$ instead of the currents $i^l(t)$.

In the remainder of this paper, we fix $\hat{k}^l = \hat{k}$ and $\hat{\zeta}^l = \hat{\zeta}$, $l = 1, ..., L$, in the commutation $\hat{\Psi}^{l^{-1}}$ in~\eqref{eq:Commutation_Solution_Ideal_SingleCoilset}.
As it will be demonstrated, the developed PGNN commutation will inherently compensate for mismatches in the motor constant and commutation phase offset. 
Additionally, due to the star configuration~\eqref{eq:Star_Configuration}, we consider $i^l(t) = [i_a^l(t), i_b^l(t)]^T \in \mathbb{R}^2$.
The current $i_c^l(t)$, when necessary, is computed as $i_c^l(t) = - i_a^l(t) - i_b^l(t)$.

\section{Physics--guided neural networks for commutation}
\label{sec:Commutation}
In the following, we consider a new set of inputs ${F_y^l}^*(t)$ and~$\Delta^l(t)$ when $i^l(t)$ cannot be controlled directly, i.e., when $\hat{\Psi}^{l^{-1}}$ in~\eqref{eq:Commutation_Solution_Ideal_SingleCoilset} is fixed.
Then, we develop a PGNN--based model to describe the electromagnetic part, which is used afterwards to derive an analytical commutation law.

\begin{figure}
\centering
\includegraphics[width=0.8\linewidth]{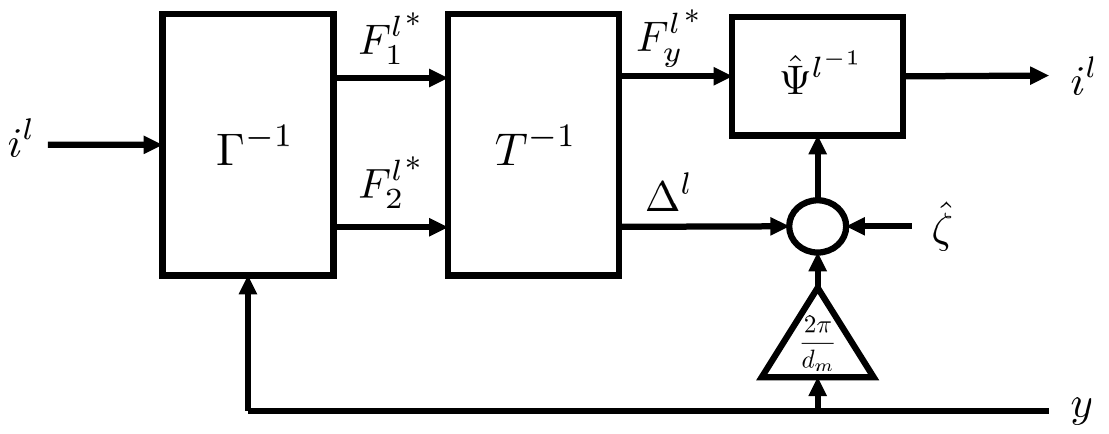}
\caption{Schematic overview to control $i^l(t)$ via ${F_y^l}^*(t)$ and $\Delta^l(t)$.}
\label{fig:Actuator_Inputs}
\end{figure}

\subsection{Suitable selection of control inputs}
\label{subsec:Inputs}
As previously stated, we resort to actively controlling the commutation magnitude ${F_y^l}^*(t)$ and phase~$\hat{\eta}^l(t)$ instead of two independent currents $i_a^l(t)$ and $i_b^l(t)$.
To do so, we adopt a change of inputs from $\{ i_a^l(t) , i_b^l(t) \}$ to $\{ {F_y^l}^*(t), \Delta^l(t) \}$, where $\Delta^l(t)$ is an offset to the commutation phase as visualized in Fig.~\ref{fig:Actuator_Inputs}. 
As a result, the commutation~\eqref{eq:Commutation_Solution_Ideal_SingleCoilset} becomes
\begin{align}
\begin{split}
\label{eq:Commutation_Solution_With_PhaseOffset}
	i^l(t) & = \hat{\Psi}^{l^{-1}} \left( {F_y^l}^*(t), \hat{\eta}(t) + \Delta^l(t) \right). \\
\end{split}
\end{align}
We rewrite the commutation law~\eqref{eq:Commutation_Solution_With_PhaseOffset} with $\hat{\Psi}^{l^{-1}}$ in~\eqref{eq:Commutation_Solution_Ideal_SingleCoilset} using the sum--formula for the sine term as follows
\begin{align}
\begin{split}
\label{eq:Commutation_Rewritten1}
	i^l(t) & = \frac{1}{\hat{k}} \begin{bmatrix} \sin \big( \hat{\eta} (t) + \Delta^l(t) \big) \\ \sin \big( \hat{\eta}(t) + \Delta^l(t) + \frac{2 \pi}{3} \big)  \end{bmatrix} {F_y^l}^*(t) = \Gamma \big( y(t) \big) \begin{bmatrix} {F_1^l}^*(t) \\ {F_2^l}^*(t) \end{bmatrix}. 
\end{split}
\end{align}
In~\eqref{eq:Commutation_Rewritten1}, $\Gamma \big( y(t) \big)$ is given as
\begin{equation}
\label{eq:Commutation_Rewritten1_Gamma}
	\Gamma \big( y(t) \big) := \frac{1}{\hat{k}} \begin{bmatrix} \sin  \big( \hat{\eta}(t) \big) & \cos \big( \hat{\eta}(t) \big)  \\ \sin  \big( \hat{\eta}(t) + \frac{2 \pi}{3} \big) & \cos \big( \hat{\eta}(t) + \frac{2 \pi}{3} \big)  \end{bmatrix} ,
\end{equation}
and the input transformation, denoted by $T$, and its inverse are
\begin{align}
\begin{split}
\label{eq:Input_Transformation}
	T: \quad & \begin{bmatrix} {F_1^l}^*(t) \\ {F_2^l}^*(t) \end{bmatrix} = \begin{bmatrix} \cos \big( \Delta^l(t) \big) {F_y^l}^*(t) \\ \sin \big( \Delta^l (t)\big) {F_y^l}^*(t)  \end{bmatrix} , \\
	T^{-1} : \quad & \begin{bmatrix} {F_y^l}^*(t) \\ \Delta^l(t)  \end{bmatrix} = \begin{bmatrix} \sqrt{ \big( {{F_1^l}^*}(t) \big)^2 + \big( {{F_2^l}^*}(t) \big)^2} \\ \tan^{-1} \left( \frac{{F_2^l}^*(t) }{{F_1^l}^*(t) } \right)  \end{bmatrix}. 
\end{split}
\end{align}

In summary, the commutation yields a set of currents $i^l(t)$ to be supplied to the system. 
Often, we are restricted to use the commutation magnitude and phase $\{ {F_y^l}^*(t), \hat{\eta}(t)+\Delta^l(t) \}$ to the system.
The magnitude and the phase $\{ {F_y^l}^*(t), \hat{\eta}(t)+\Delta^l(t) \}$ that yield the current $i^l(t)$ can be computed via the transformations~\eqref{eq:Commutation_Rewritten1} and~\eqref{eq:Input_Transformation}, i.e., we premultiply $i^l(t)$ by $\Gamma \big( y(t) \big)^{-1}$ to obtain $\{ {F_1^l}^*(t), {F_2^l}^*(t) \}$ followed by $T^{-1}$ to obtain $\{ {F_y^l}^*(t), \Delta^l(t) \}$. 
The procedure is schematically visualized in Fig.~\ref{fig:Actuator_Inputs}.

\subsection{PGNNs for identification of the electromagnetic part}
\label{subsec:PGNN_Electromagnetics}
\begin{figure}
\centering
\includegraphics[width=1.0\linewidth]{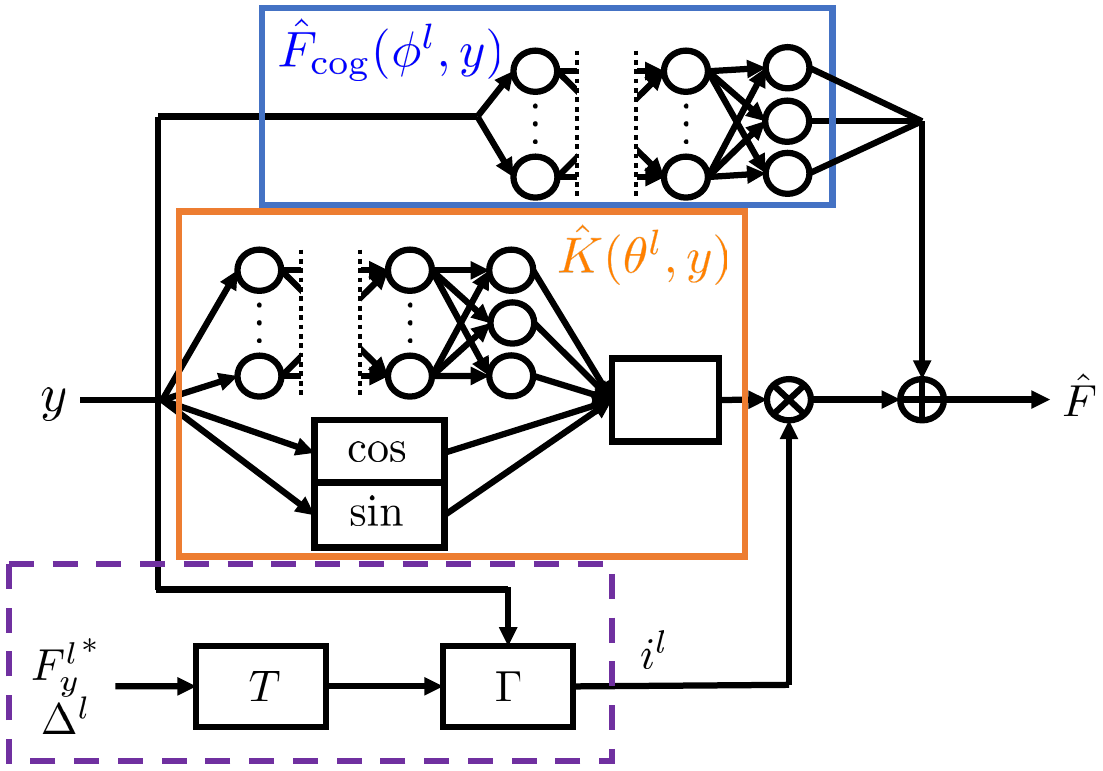}
\caption{Schematic overview of the PGNN for identification of the electromagnetic part. When currents $i^l$ are directly accessible and present in the data sets~$Z_{1,2}^l$, the dotted purple block is removed.}
\label{fig:PGNN_Parametrization}
\end{figure}
We adopt Assumptions~\ref{ass:Lorentz_Actuation} and~\ref{ass:Decoupling_Coilsets} on the electromagnetic part~\eqref{eq:Electromagnetics}, such that we obtain
\begin{align}
\begin{split}
\label{eq:Electromagnetics_WithAssumptions}
	F(t) &= \Psi \big( i(t), y(t) \big) = \sum_l K^l \big( y(t) \big) i^l(t) + F_{\textup{cog}} \big( y(t) \big),
\end{split}
\end{align}
where $K^l \big( y(t) \big) : \mathbb{R} \rightarrow \mathbb{R}^{3 \times 2}$ is the gain matrix, and $F_{\textup{cog}}: \mathbb{R} \rightarrow \mathbb{R}^3$ the cogging force.
Both $K^l \big( y(t) \big)$ and $F_{\textup{cog}} \big( y(t) \big)$ are (partly) unknown. 
Therefore, we develop a PGNN--based model of the electromagnetic part~\eqref{eq:Electromagnetics_WithAssumptions} which enhances the physics--based knowledge structurally with neural networks.
Consider again the data sets $Z_{1,2}^l$ in~\eqref{eq:DataSets}, i.e., generated by operating a single coil set. 
Then, we predict the measured force $F(t)$ with the PGNN as
\begin{align}
\begin{split}
\label{eq:PGNN_Electromagnetics_SingleCoilset}
	\hat{F}&(\theta^l, \phi^l, t) = \hat{K} \big( \theta^l, y(t) \big) i^l(t) + \hat{F}_{\textup{cog}} \big( \phi^l , y(t) \big) \\
	& = \hat{K} \big( \theta^l , y(t) \big) \Gamma \big( y(t) \big) T \left( \begin{bmatrix} {F_y^l}^*(t) \\ \Delta^l(t) \end{bmatrix} \right) + \hat{F}_{\textup{cog}} \big( \phi^l , y(t) \big),
\end{split}
\end{align}
where $\theta^l \in \mathbb{R}^{n_{\theta}}$ and $\phi^l \in \mathbb{R}^{n_{\phi}}$ are the parameters corresponding to the gain matrix and cogging force, $n_{\theta}, n_{\phi} \in \mathbb{Z}_{>0}$. 
The PGNN model~\eqref{eq:PGNN_Electromagnetics_SingleCoilset} is visualized in Fig.~\ref{fig:PGNN_Parametrization}, and the gain matrix and cogging force are given as
\begin{align}
\begin{split}
\label{eq:GainMatrix_Cogging}
	\hat{K} \big( \theta^l, y(t) \big) & = A^l \cos \left( \frac{ 2 \pi}{d_m} y(t) \right) + B^l \sin \left( \frac{2 \pi}{d_m} y(t) \right) \\
	& \quad \quad + \begin{bmatrix} f_{\textup{NN}} \big( \theta_1^l, y(t)  \big) & f_{\textup{NN}} \big( \theta_2^l, y(t) \big) \end{bmatrix} \\
	\hat{F}_{\textup{cog}} \big( \phi^l, y(t) \big) & = f_{\textup{NN}} \big( \phi^l, y(t) \big) .
\end{split}
\end{align}
In~\eqref{eq:GainMatrix_Cogging}, $A^l, B^l \in \mathbb{R}^{3 \times 2}$ are the physical parameters, such that $\theta_{\textup{phy}}^l = [\textup{col}(A^l)^T, \textup{col}(B^l)^T]^T$ and $\theta^l = [{\theta_{\textup{phy}}^l}^T, {\theta_1^l}^T, {\theta_2^l}^T]^T$.
The parameter vectors $\theta_1^l$, $\theta_2^l$ and $\phi^l$ represent the weights and biases of the corresponding neural network.
Let $\mu \in \{ \theta_1^l, \theta_2^l, \phi^l \}$, then the network output is given as
\begin{align}
\begin{split}
\label{eq:NeuralNetwork}
	f_{\textup{NN}} \big( \mu, y(t) \big) = W_{I+1}^{\mu} \alpha_I \bigg( ... \alpha_1 \big( W_1^{\mu} y(t) + B_1^{\mu} \big) \bigg) + B_{I+1}^{\mu},
\end{split}
\end{align}
where $W_{i}^{\mu} \in \mathbb{R}^{n_{i-1} \times n_i}$ are the weights, $B_i^{\mu} \in \mathbb{R}^{n_i}$ the biases, $\alpha_i : \mathbb{R}^{n_i} \rightarrow \mathbb{R}^{n_i}$ denotes the aggregation of activation functions, $n_i \in \mathbb{Z}_{>0}$ the number of neurons in layer $i = 1, ..., I$ with $I \in \mathbb{Z}_{>0}$ the number of hidden layers. The parameters are given as $\mu = [\textup{col}(W_1^{\mu})^T, {B_1^{\mu}}^T, ..., \textup{col}(W_I^{\mu})^T, {B_I^{\mu}}^T]^T$.

\begin{remark}
	To ensure a persistently exciting data set for the PGNN~\eqref{eq:PGNN_Electromagnetics_SingleCoilset},~\eqref{eq:GainMatrix_Cogging}, we require for all positions $y(t)$ the following:
	\begin{enumerate}
		\item Different magnitude of the desired force ${F_y^l}^*(t)$ to separate the Lorentz force from the cogging force;
		\item Different ratio of the currents $i_a^l(t)$ and $i_b^l(t)$ to separate the columns of $\hat{K}^l \big( \theta^l, y(t) \big)$.  
	\end{enumerate}
	Therefore, we generate $Z_{i}^l$, $i=1,2$, with a reference $y^*(t)$ that encompasses the full stroke of the system with different velocities (point $1.$, assuming non--zero friction in the system) and repeat the experiment with $\Delta^l(t) = (-1)^i \Delta$, $i=1,2$ (point $2.$). 
\end{remark}
\begin{remark}
	From the physical understanding of the system, we expect position dependent harmonic functions with a period of ${d_m}$, as in~\eqref{eq:Electromagnetics_GainMatrix}.
	This physics--based information is embedded structurally in the PGNN model~\eqref{eq:PGNN_Electromagnetics_SingleCoilset}.
	The addition of the neural networks allow the PGNN~\eqref{eq:PGNN_Electromagnetics_SingleCoilset},~\eqref{eq:GainMatrix_Cogging} to include complex to model and unknown effects, for example caused by an imperfect magnetic field due to differences in the field intensity or orientation of the permanent magnets.
\end{remark}

The parameters of the PGNN are identified for each coil set using the data sets $Z_1^l$ and $Z_2^l$ as in~\eqref{eq:DataSets} according to Algorithm~\ref{alg:Identification}.
Note that,~$\Delta$ can be considered as an excitation for $\Delta^l(t)$ in~\eqref{eq:Commutation_Solution_With_PhaseOffset}. 
The cost function is chosen as the sum of the MSE of the data fit and a regularizatoin term, such that
\begin{align}
\begin{split}
\label{eq:Identification_CostFunction}
	V \left( \begin{bmatrix} \theta^l \\ \phi^l \end{bmatrix}, Z_1^l, Z_2^l \right) & = \frac{1}{2N} \sum_{t \in Z_{1,2}^l } \| F(t) - \hat{F} (\theta^l, \phi^l, t) \|_2^2 \\
	& \quad \quad \quad  + \left\| \Lambda \left(\theta_{\textup{phy}}^l - \theta_{\textup{phy}}^{l^*} \right) \right\|_2^2.
\end{split}
\end{align}
The regularization term in~\eqref{eq:Identification_CostFunction} is proposed in~\cite{Bolderman2024} and solves the overparametrization in the PGNN that is caused by the ability of the NN to identify also the physics--based part of~$\hat{K}^l \big( \theta^l, y(t) \big)$ in~\eqref{eq:GainMatrix_Cogging}.
The variable $\theta_{\textup{phy}}^{l^*}$ describes a set of desired physical parameters, e.g., obtained by identification of the stand--alone physical model or from the ideal description of the electromagnetic part~\eqref{eq:Electromagnetics_GainMatrix}. The matrix $\Lambda \in \mathbb{R}^{12 \times 12}$ quantifies the weight of the regularization.

\begin{algorithm}
\caption{Identification of the electromagnetic part~\eqref{eq:Electromagnetics_WithAssumptions} with PGNN--based model~\eqref{eq:PGNN_Electromagnetics_SingleCoilset}.}
\label{alg:Identification}
	\begin{algorithmic}
		\State \textbf{Initialize} $\hat{k}$ and $\hat{\zeta}$. 
		\For{ $l = 1, ..., L$}
			\State \textbf{Generate data}~\eqref{eq:DataSets} using commutation law~\eqref{eq:Commutation_Solution_With_PhaseOffset} with
			\State \quad \quad  $\Delta^l(t) = (-1)^i \Delta$, $\, i = 1,2$. 
			\State \textbf{Identify} parameters: 	
			\begin{equation}
			\label{eq:Identification_Criterion}
				\begin{bmatrix} \hat{\theta}^l \\ \hat{\phi}^l \end{bmatrix} = \textup{arg} \min_{[{\theta^l}^T, {\phi^l}^T]^T} V \left( \begin{bmatrix} \theta^l \\ \phi^l \end{bmatrix} , Z_1^l, Z_2^l \right). 
			\end{equation}
		\EndFor 
	\end{algorithmic}
\end{algorithm}

\textcolor{black}{Training the PGNN~\eqref{eq:PGNN_Electromagnetics_SingleCoilset} according to identification criterion~\eqref{eq:Identification_Criterion} with cost function~\eqref{eq:Identification_CostFunction} is a non--convex optimization, and potentially yields a local minimum. 
With the aim to ensure that the PGNN~\eqref{eq:PGNN_Electromagnetics_SingleCoilset} achieves a smaller cost function~\eqref{eq:Identification_CostFunction} compared to the classical model~\eqref{eq:Electromagnetics_Coilsets},~\eqref{eq:Electromagnetics_GainMatrix}, we derive a least--squares solution of~\eqref{eq:Identification_CostFunction} with respect to the parameters in which the PGNN is linear.
The least--squares solution can be used before training as an initialization, during training, or after training.
To do so, we rewrite the PGNN~\eqref{eq:PGNN_Electromagnetics_SingleCoilset},~\eqref{eq:GainMatrix_Cogging} as
\begin{align}
\begin{split}
\label{eq:PGNN_Rewritten_LIP}
	\hat{F} \big( \theta^l, \phi^l, t) = \begin{bmatrix} {\theta_{\textup{L}, y}^l}^T M_y (\theta_{\textup{NL}}^l, t) \\ {\theta_{\textup{L}, x}^l}^T M_x (\theta_{\textup{NL}}^l, t) \\ {\theta_{\textup{L}, z}^l}^T M_z (\theta_{\textup{NL}}^l, t) \end{bmatrix}. 
\end{split}
\end{align}
In~\eqref{eq:PGNN_Rewritten_LIP}, $\theta_{\textup{L}, q}^l$, $q =  y, x, z$, are the parameters in which the PGNN is linear, and $M_q (\theta_{\textup{NL}}^l, t)$ form the corresponding regressor vectors. 
More specifically, $\theta_{\textup{L},q}^l$ is given as
\begin{align}
\begin{split}
\label{eq:Linear_Parameter_Vector}
	\theta_{\textup{L}}^l & = [A^l, B^l, W_{I+1}^{\theta_1^l}, B_{I+1}^{\theta_1^l}, W_{I+1}^{\theta_2^l}, B_{I+1}^{\theta_2^l}, W_{I+1}^{\phi^l}, B_{I+1}^{\phi^l}]^T, \\
	\theta_{\textup{L},y}^l &= \theta_{\textup{L}}^l \begin{bmatrix} 1 \\ 0 \\ 0 \end{bmatrix}, \quad \theta_{\textup{L},x}^l = \theta_{\textup{L}}^l \begin{bmatrix} 0 \\ 1 \\ 0 \end{bmatrix}, \quad \theta_{\textup{L},z}^l = \theta_{\textup{L}}^l \begin{bmatrix} 0 \\ 0 \\ 1  \end{bmatrix}.
\end{split}
\end{align}
Suppose that $\Lambda$ is diagonal. Then, the cost function~\eqref{eq:Identification_CostFunction} can be rewritten as
\begin{align}
\begin{split}
\label{eq:Identification_CostFunctionRewritten}
	V & \left( \begin{bmatrix} \theta^l \\ \phi^l \end{bmatrix}, Z_1^l, Z_2^l \right) = \sum_{q \in \{y,x,z\} } \Bigg( \frac{1}{2N} \sum_{t \in Z_{1,2}^l} \Big( F_q(t) - {\theta_{\textup{L},q}^l}^T M_q (\theta_{\textup{NL}}^l, t) \Big)^2 \\
	& \quad \quad \quad \quad \quad   + \left\| \Lambda_q [I, 0] \left( \theta_{\textup{L},q}^l - \begin{bmatrix} \theta_{\textup{phy},q}^{l^*} \\ 0 \end{bmatrix} \right)\right\|_2^2  \Bigg) .
\end{split}
\end{align}
As a consequence, the parameters $\theta_{\textup{L},q}^l$ can be chosen as the least--squares solution for each direction independently. 
By doing so, the PGNN achieves a lower value of the cost function~\eqref{eq:Identification_CostFunction} compared to the classical model, as is formalized in the next proposition. 
\begin{proposition}
\label{prop:Initialization}
Consider the PGNN~\eqref{eq:PGNN_Electromagnetics_SingleCoilset},~\eqref{eq:GainMatrix_Cogging} written as~\eqref{eq:PGNN_Rewritten_LIP} with identification cost function~\eqref{eq:Identification_CostFunction}. 
Let~$\Lambda$ be diagonal and choose $\theta_{\textup{phy}}^{l^*}$ to represent the classical model~\eqref{eq:Electromagnetics_SingleCoilset},~\eqref{eq:Electromagnetics_GainMatrix}. 
Suppose that $\theta_{\textup{NL}}^l$ is known, e.g., randomly initialized or attained during training, and choose 
\begin{align}
\begin{split}
\label{eq:Optimal_Initialization}
	\theta_{\textup{L},q}^l & = \left( \frac{1}{2N} \sum_{t \in Z_{1,2}^l} M_q(\theta_{\textup{NL}}^l, t) M_q (\theta_{\textup{NL}}^l, t)^T + \begin{bmatrix} \Lambda_q^2 & 0 \\ 0 & 0 \end{bmatrix} \right)^{-1} \cdot  \\
	& \left( \frac{1}{2N} \sum_{t \in Z_{1,2}^l} F_q(t) M_q (\theta_{\textup{NL}}^l, t) + \begin{bmatrix} \Lambda_q^2 & 0 \\ 0 & 0 \end{bmatrix} \begin{bmatrix} \theta_{\textup{phy},q}^{l^*} \\ 0 \end{bmatrix}  \right) , \; \; q  = y, x, z . 
\end{split}
\end{align}
Then, it holds that
\begin{equation}
\label{eq:LowerCostFunction}
	V \left( \begin{bmatrix} \theta^l \\ \phi^l \end{bmatrix} , Z_1^l, Z_2^l \right) \leq V \left( \theta_{\textup{phy}}^{l^*}, Z_1^l, Z_2^l \right).
\end{equation}
Moreover,~\eqref{eq:LowerCostFunction} holds with strict inequality if and only if
\begin{align}
\begin{split}
\label{eq:StrictInequality_Condition}
	& \frac{1}{2N} \sum_{t \in Z_{1,2}^l} M_q(\theta_{\textup{NL}}^l, t) \left( F_q(t) - M_q(\theta_{\textup{NL}}^l, t)^T \begin{bmatrix} \theta_{\textup{phy},q}^{l^*} \\ 0 \end{bmatrix} \right) \neq 0 , \\
	& \quad \quad \quad  \textup{for some } q \in \{y,x,z\}. 
\end{split}
\end{align}
\end{proposition}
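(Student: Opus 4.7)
The plan is to exploit the fact that, once the nonlinear parameters $\theta_{\textup{NL}}^l$ are frozen, the rewritten cost~\eqref{eq:Identification_CostFunctionRewritten} decouples into three independent regularized least--squares subproblems in $\theta_{\textup{L},y}^l$, $\theta_{\textup{L},x}^l$, $\theta_{\textup{L},z}^l$; the decoupling across directions is exactly what the diagonality of $\Lambda$ buys us. My first observation would be that taking $\theta_{\textup{L},q}^l = \big[(\theta_{\textup{phy},q}^{l^*})^T,\, 0 \big]^T$ for every $q$ zeroes the output--layer weights and biases of the three NNs stored inside $\theta_{\textup{L}}^l$ (cf.~\eqref{eq:Linear_Parameter_Vector}), so the PGNN~\eqref{eq:PGNN_Electromagnetics_SingleCoilset},~\eqref{eq:GainMatrix_Cogging} collapses exactly to the classical model~\eqref{eq:Electromagnetics_SingleCoilset},~\eqref{eq:Electromagnetics_GainMatrix}. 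Therefore $V(\theta_{\textup{phy}}^{l^*}, Z_1^l, Z_2^l)$ is precisely the value of~\eqref{eq:Identification_CostFunctionRewritten} at this choice.

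Next I would verify that each per--direction cost is a convex quadratic in $\theta_{\textup{L},q}^l$ with Hessian $\frac{1}{N}\sum_t M_q(\theta_{\textup{NL}}^l, t) M_q(\theta_{\textup{NL}}^l, t)^T + 2\,\textup{diag}(\Lambda_q^2,\, 0)$, which is positive definite under the invertibility assumption implicit in~\eqref{eq:Optimal_Initialization}. Setting the gradient to zero yields the regularized normal equations whose unique solution is exactly~\eqref{eq:Optimal_Initialization}. Because the three subproblems are uncoupled, choosing~\eqref{eq:Optimal_Initialization} for each $q$ minimizes~\eqref{eq:Identification_CostFunctionRewritten} over $\theta_{\textup{L}}^l$ with $\theta_{\textup{NL}}^l$ fixed, which immediately delivers~\eqref{eq:LowerCostFunction}.

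Finally, for the iff characterization I would argue that equality in~\eqref{eq:LowerCostFunction} holds if and only if $\big[(\theta_{\textup{phy},q}^{l^*})^T,0\big]^T$ is itself a minimizer of the $q$--subproblem for every $q$, which for a convex quadratic is equivalent to its gradient vanishing there. The regularization term contributes zero gradient at that point, while differentiating the data--fit part yields $-\frac{1}{N}\sum_t M_q(\theta_{\textup{NL}}^l,t)\big(F_q(t) - M_q(\theta_{\textup{NL}}^l,t)^T \big[(\theta_{\textup{phy},q}^{l^*})^T,0\big]^T\big)$, which is a negative scalar multiple of the left--hand side of~\eqref{eq:StrictInequality_Condition}. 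Hence strict inequality in~\eqref{eq:LowerCostFunction} is equivalent to~\eqref{eq:StrictInequality_Condition} for at least one $q$. The only mild technicality — hardly a real obstacle — is that when the matrix in~\eqref{eq:Optimal_Initialization} is merely PSD rather than PD the minimizer is non--unique, but the argument still goes through because a point with nonzero gradient cannot minimize a convex function. The main bookkeeping difficulty will be keeping the partition~\eqref{eq:Linear_Parameter_Vector} consistent with the direction--wise decomposition of the data--fit term, but once that is in place the proof reduces to standard linear least--squares.
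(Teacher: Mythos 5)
Your proposal is correct and follows essentially the same route as the paper's proof: recover the classical model by setting the NN output-layer weights in $\theta_{\textup{L},q}^l$ to zero, observe that~\eqref{eq:Optimal_Initialization} is the per-direction regularized least--squares (global) minimizer of~\eqref{eq:Identification_CostFunctionRewritten} once $\theta_{\textup{NL}}^l$ is fixed and $\Lambda$ is diagonal, and conclude~\eqref{eq:LowerCostFunction}. Your treatment of the equality case via the vanishing gradient at $\bigl[(\theta_{\textup{phy},q}^{l^*})^T, 0\bigr]^T$ is in fact slightly more careful than the paper's terse argument (it also covers the non-unique-minimizer situation), but it is the same underlying idea, so no further comparison is needed.
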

\begin{proof}
	The PGNN~\eqref{eq:PGNN_Rewritten_LIP} recovers the classical model~\eqref{eq:Electromagnetics_Coilsets},~\eqref{eq:Electromagnetics_GainMatrix} for
	\begin{align}
	\begin{split}
	\label{eq:Recover_ClassicalModel}
		\theta_{\textup{L}}^{l^*} = [A^{l^*}, B^{l^*}, 0]^T.
	\end{split}
	\end{align}
	Eq.~\eqref{eq:Optimal_Initialization} is the least--squares solution of the cost function~\eqref{eq:Identification_CostFunctionRewritten} in $q$ direction, i.e., the global minimum of the cost function~\eqref{eq:Identification_CostFunctionRewritten} for the parameters $\theta_{\textup{L},q}^l$. 
	Therefore, the classical model, recovered for $\theta_{\textup{L},q}^l$ as in~\eqref{eq:Recover_ClassicalModel}, gives the upperbound~\eqref{eq:LowerCostFunction}. 
	The upperbound~\eqref{eq:LowerCostFunction} holds with equality only if $\theta_{L,q}^l - \begin{bmatrix} \theta_{\textup{phy},q}^{l^*} \\ 0 \end{bmatrix} = 0$ for all $q \in y,x,z$, which yields condition~\eqref{eq:StrictInequality_Condition}.  
\end{proof}
}

The PGNN model of the complete electromagnetic part of the system is obtained by substituting $[{\theta^l}^T, {\phi^l}^T]^T = [\hat{\theta}^{l^T}, \hat{\phi}^{l^T}]^T$ in~\eqref{eq:PGNN_Electromagnetics_SingleCoilset} and combining the coil sets as follows
\begin{align}
\begin{split}
\label{eq:PGNN_Electromagnetics}
	\hat{F}(\hat{\theta}, \hat{\phi}, t) & = \sum_{l} \hat{K}^l \big( \hat{\theta}^l , y(t) \big) i^l(t) + \frac{1}{L} \sum_{l} \hat{F}_{\textup{cog}} \big( \hat{\phi}^l, y(t) \big) \\
	& = \hat{K} \big( \hat{\theta}, y(t) \big) \begin{bmatrix} i^1 (t) \\ \vdots \\ i^L(t) \end{bmatrix} + \frac{1}{L} \sum_{l} \hat{F}_{\textup{cog}} \big( \hat{\phi}^l, y(t) \big). 
\end{split}
\end{align}
Note that, in~\eqref{eq:PGNN_Electromagnetics} we take the mean of the cogging forces, since the total cogging force is identified when performing the identification for each coil set. 

\begin{remark}
	In the absence of force measurements, it is possible to reconstruct the forces from position measurements $y$ and an inverse model of the dynamics $G$ in Fig.~\ref{fig:Schematic_Overview}. 
	Alternatively, we can simultaneously identify a model of the electromagnetic part $\Psi$ and a model of the dynamics $G$. 
	However, these approaches are limited to directions in which movement is measured.
\end{remark}

\subsection{An analytical commutation solution}
\label{subsec:PGNN_Commutation}
A PGNN model~\eqref{eq:PGNN_Electromagnetics} describing the electromagnetic part~\eqref{eq:Electromagnetics} is identified in the previous section. 
In the following, we formulate the optimal commutation problem~\eqref{eq:Optimal_Commutation} using the PGNN model~\eqref{eq:PGNN_Electromagnetics} based on which we derive an analytical solution which is suitable for real--time implementation. 
We replace the constraints in~\eqref{eq:Optimal_Commutation} to have the predicted forces $\hat{F}$ become the desired forces $F^*$, such that we obtain 
\begin{align}
\begin{split}
\label{eq:Optimal_Commutation_Model_Current}
	i(t) & = \textup{arg} \min_{i(t)} \| i(t) \|_2^2 , \\
	& \textup{subject to: } \hat{F}(\hat{\theta}, \hat{\phi},t) = F^*(t). 
\end{split}
\end{align}
Suppose that we are able to control directly the currents. Then, the commutation law is obtained by substituting the electromagnetic part~\eqref{eq:PGNN_Electromagnetics} in~\eqref{eq:Optimal_Commutation_Model_Current}, such that
\begin{equation}
\label{eq:Commutation_Model_Currents}
	i(t) =  \hat{K} \big( \hat{\theta}, y(t) \big) ^{\dagger} \Big( F^*(t) - \frac{1}{L} \sum_l \hat{F}_{\textup{cog}} \big( \hat{\phi}^l, y(t) \big) \Big).
\end{equation}
\textcolor{black}{
\begin{remark} 
The optimal commutation problem~\eqref{eq:Optimal_Commutation_Model_Current} is analytically solvable due to the adopted PGNN structure~\eqref{eq:PGNN_Electromagnetics}, i.e., by learning position dependent coefficients with the NNs.
This is novel compared to the PGNN architectures used for feedforward control~\cite{Bolderman2024}, where the NNs are additive to the output.
Such a PGNN for the electromagnetic part is given as
\begin{align}
\begin{split}
\label{eq:Electromagnetics_PGNN_Example}
	\hat{F}^l (\theta_{\textup{phy}}^l, \theta_{\textup{NN}}^l, t) =&  \left( A^l \cos \left( \frac{ 2 \pi}{d_m} y(t) \right) + B^l \sin \left( \frac{ 2 \pi}{d_m} y(t) \right) \right) i^l(t) \\
	&  + f_{\textup{NN}} \big( \theta_{\textup{NN}}^l, y(t), i^l(t) \big),
\end{split}
\end{align}
for which no analytical solution can be computed from~\eqref{eq:Optimal_Commutation_Model_Current}, which limits practical applicability. 
\end{remark}}

As a final step, when necessary, we can transform the currents $i^l(t)$ to the commutation magnitude ${F_y^l}^*(t)$ and phase $\Delta^l(t)$.
This is achieved by adopting the scheme in Fig.~\ref{fig:Actuator_Inputs}, i.e., we use $\Gamma \big( y(t) \big)$ in~\eqref{eq:Commutation_Rewritten1_Gamma} and $T^{-1}$ in~\eqref{eq:Input_Transformation} such that
\begin{equation}
\label{eq:Current_To_ForceDelta}
	\begin{bmatrix} {F_y^l}^*(t) \\ \Delta^l(t) \end{bmatrix} = T^{-1} \left( \Gamma \big( y(t) \big)^{-1} i^l(t) \right).
\end{equation}
Observe that ${F_y^l}^*(t)$ and $\Delta^l(t)$ are the commutation inputs that yield $i^l(t)$, see Fig.~\ref{fig:Actuator_Inputs}. 
The resulting closed--loop control setup with the proposed PGNN commutation algorithm is visualized in Fig.~\ref{fig:Schematic_Overview_Proposed}.

\begin{figure}
\centering
\includegraphics[width=1.0\linewidth]{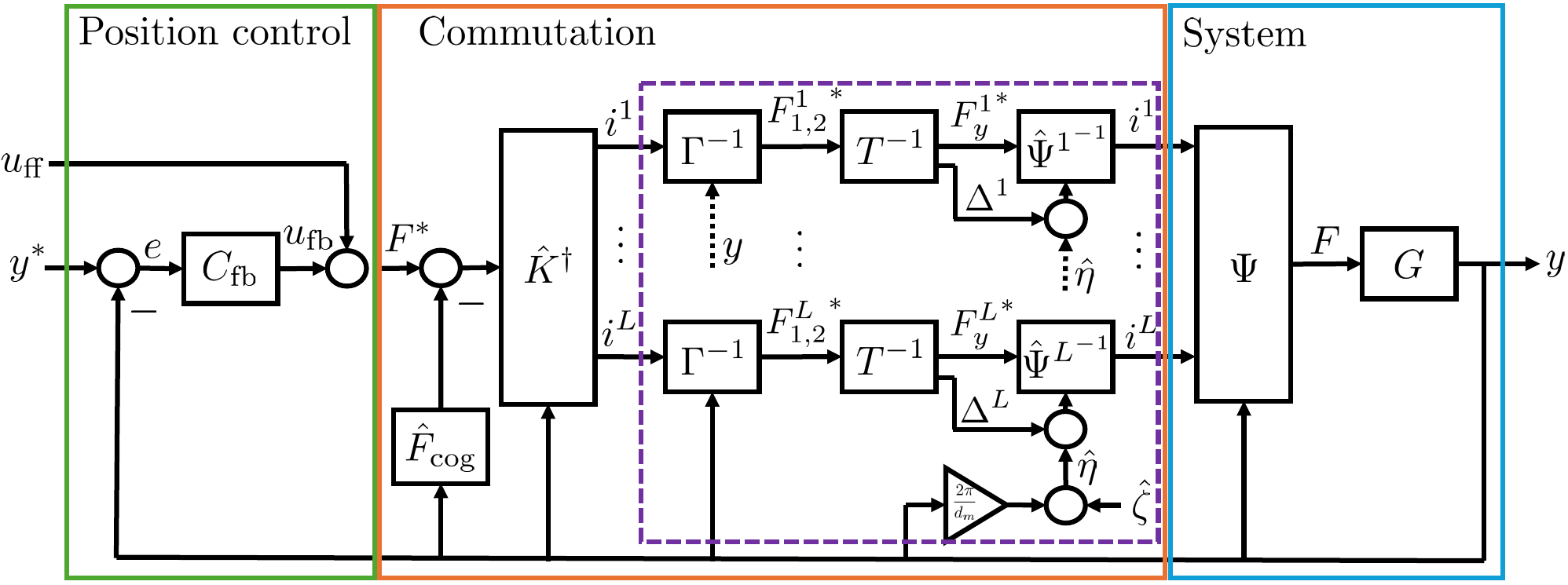}
\caption{Schematic overview of the proposed PGNN--based commutation algorithm. When currents $i^l$ are directly prescribed, the dotted purple block can be removed.}
\label{fig:Schematic_Overview_Proposed}
\end{figure}

\section{Experimental results}
\label{sec:Experiments}
We demonstrate the effectiveness of the developed PGNN commutation framework on the real--life industrial CLM shown in Fig.~\ref{fig:CLM}. 
The CLM features force sensors in the translator that measure $F(t)$, i.e., the driving force $F_y(t)$ and the out--of--plane force $F_y(t)$ and torque $T_z(t)$. 

\textbf{Data sets:} the data sets $Z_i^l$ \eqref{eq:DataSets} are generated by operating the CLM in closed--loop with the classical commutation law~\eqref{eq:Commutation_Solution_Ideal_SingleCoilset} with the initial parameters listed in Table~\ref{tab:Parameters_CLM}. 
We choose $\Delta^l(t) = (-1)^i \frac{\pi}{4}$, $i = 1,2$, to generate forces in out--of--plane directions, and thereby enable identification of the electromagnetic part in $x$ and $z$ direction. 
We supply a reference $y^*(t)$ consisting of several third order trajectories moving from $-0.1$ to $0.1$ $m$, see Fig.~\ref{fig:Data_Generation}. 
The maximum velocity is chosen as $0.025$, $0.075$ and $0.15$ $\frac{m}{s}$, and the acceleration and jerk are restricted to $1$~$\frac{m}{s^2}$ and~$1000$ $\frac{m}{s^3}$, respectively. 
For simplicity, we do not yet use a feedforward controller, i.e.,~$u_{\textup{ff}}(t) = 0$.

\begin{figure}
\centering
\includegraphics[width=1.0\linewidth]{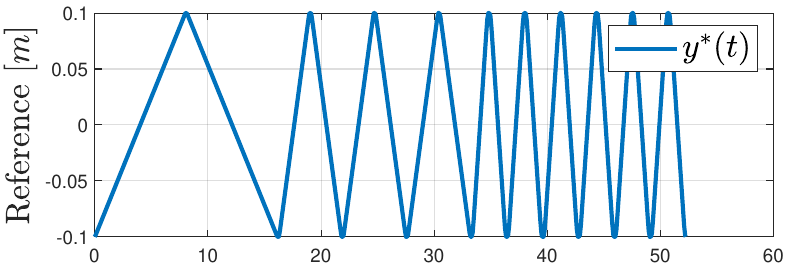}
\caption{The reference $y^*(t)$ used for generating the data sets $Z_i^l$.}
\label{fig:Data_Generation}
\end{figure}

\textbf{Commutation:} we compare the two types of commutation:
\begin{enumerate}
	\item \textbf{Classical:} the classical commutation~\eqref{eq:Commutation_Solution_Ideal_SingleCoilset} with motor constants $\hat{k}^l$ and commutation offsets $\hat{\zeta}^l$ in Table~\ref{tab:Parameters_CLM}, i.e., calibrated according to Algorithm~\ref{alg:Calibration};
	\item \textbf{PGNN:} the commutation law~\eqref{eq:Commutation_Model_Currents} with the PGNN--based model for the electromagnetic part \eqref{eq:PGNN_Electromagnetics_SingleCoilset},~\eqref{eq:GainMatrix_Cogging} identified according to Algorithm~\ref{alg:Identification} with cost function~\eqref{eq:Identification_CostFunction} using $\Lambda = 0.1 I$ and $\theta_{\textup{phy}}^*$ from the physics--based model, i.e., the PGNN without the NNs identified according to Algorithm~\ref{alg:Identification}. 
	The linear parameters are initialized as in Proposition~\ref{prop:Initialization}. The NNs have a single hidden layer $I=1$ with $n_1 = 2$ neurons in $\hat{K} \big( \theta^l, y(t) \big)$ and $n_1 = 16$ neurons in $\hat{F}_{\textup{cog}} \big( \phi^l, y(t) \big)$. 
\end{enumerate}
The cogging force identified by the PGNN is presented in Fig.~\ref{fig:Cogging_Force}.  

\begin{figure}
\centering
\includegraphics[width=1.0\linewidth]{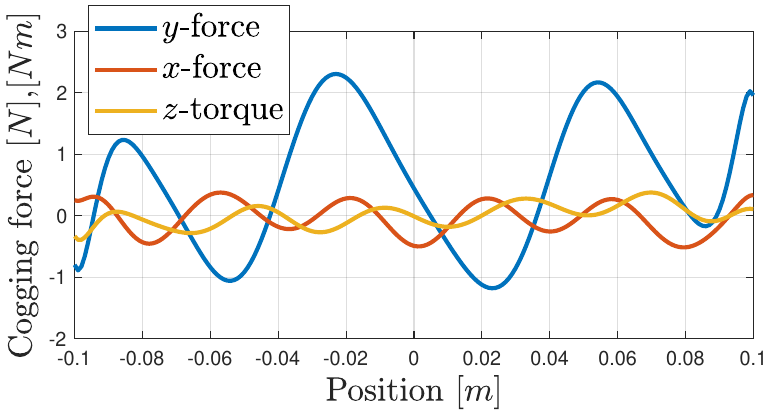}
\caption{Identified cogging force $\frac{1}{L} \sum_l \hat{F}_{\textup{cog}} \big( \hat{\phi}^l, y(t) \big)$ in~\eqref{eq:GainMatrix_Cogging} for positions $y(t)$.}
\label{fig:Cogging_Force}
\end{figure}

\textbf{Results:} following industrial usage of the CLM, we adopt the following feedforward controller 
\begin{equation}
\label{eq:FeedforwardController}
	u_{\textup{ff}} (t) = m \delta \frac{d^2}{dt^2} r(t) + f_v \delta \frac{d}{dt} r(t) + f_c \delta \textup{sign} \big( \frac{d}{dt} r(t) \big).
\end{equation}
In~\eqref{eq:FeedforwardController},~$\delta = \frac{z + 1}{2}$ compensates the delay induced by the zero--order hold discrete--to--continuous operator with $z$ the forward shift operator, and $m$, $f_v$ and $f_c$ are the mass, the viscous friction coefficient and Coulomb friction coefficient, respectively. 
The feedforward parameters $m$, $f_v$ and $f_c$ are identified using a data set obtained by operating the CLM in closed--loop with the considered commutation and the reference $y^*(t)$ in Fig.~\ref{fig:Data_Generation}. 
As a result, the feedforward parameters used for both commutation methods differ. 

\begin{figure}
\centering
\includegraphics[width=1.0\linewidth]{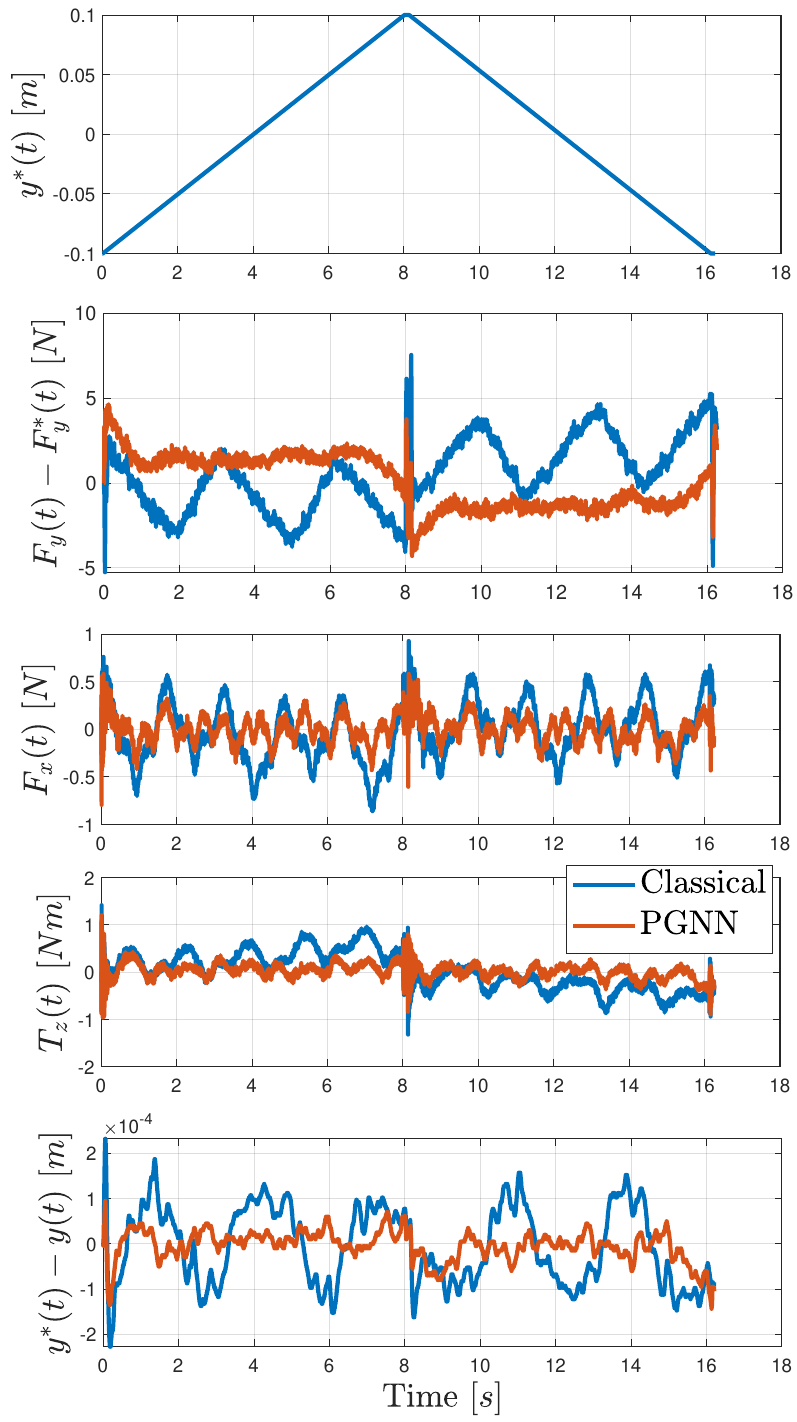}
\caption{Experimental results for different commutation on the slow reference $y^*(t)$ with maximum velocity $0.025$ $\frac{m}{s}$, from top to bottom: the reference $y^*(t)$, the commutation errors $F(t) - F^*(t)$ in $y$, $x$ and $z$ direction, and the tracking error $y^*(t) - y(t)$.}
\label{fig:CommutationError_Ref1}
\end{figure}
\begin{figure}
\centering
\includegraphics[width=1.0\linewidth]{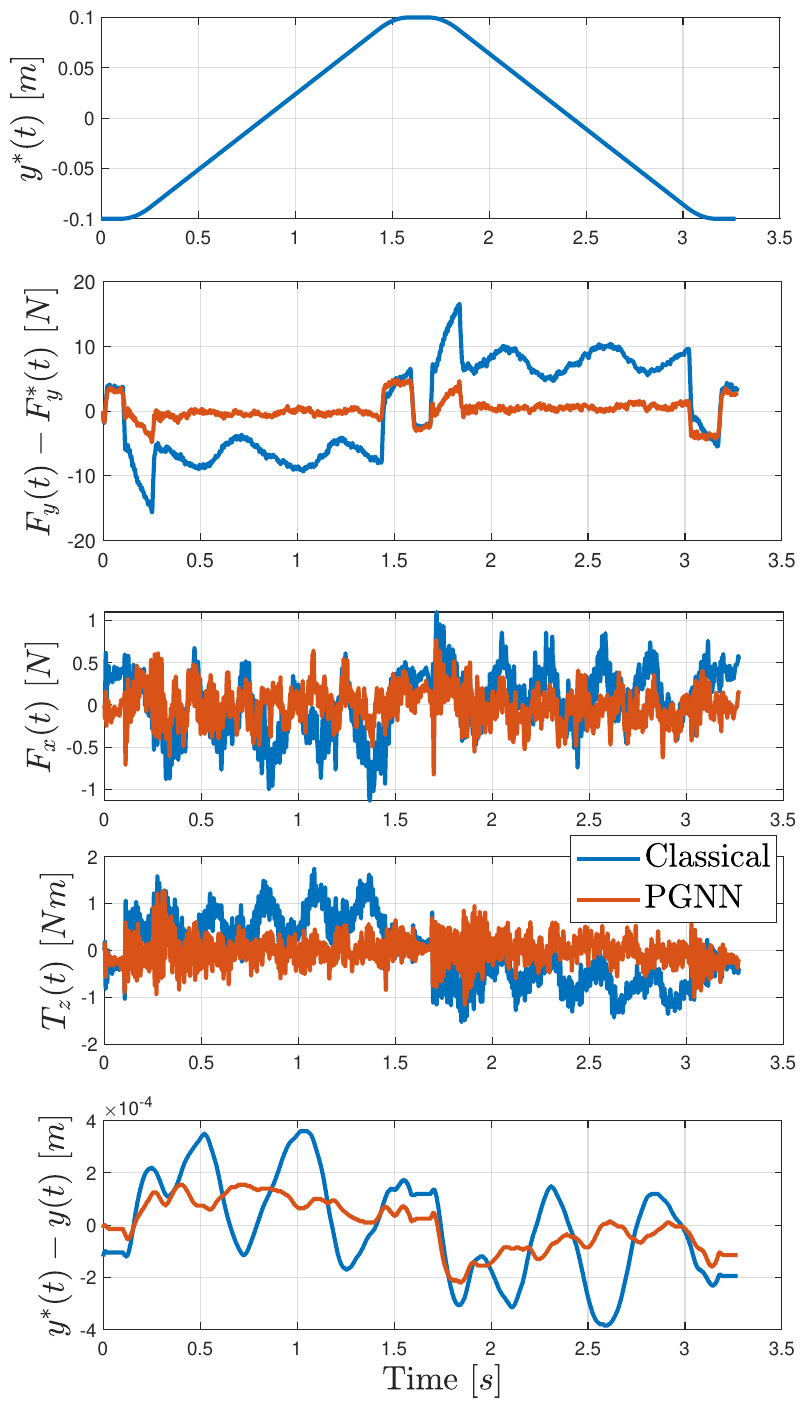}
\caption{Experimental results for different commutation on the fast reference $y^*(t)$ with maximum velocity $0.15$ $\frac{m}{s}$, from top to bottom: the reference $y^*(t)$, the commutation errors $F(t) - F^*(t)$ in $y$, $x$ and $z$ direction, and the tracking error $y^*(t) - y(t)$.}
\label{fig:CommutationError_Ref2}
\end{figure}

The filtered commutation error and resulting tracking error for the classical and PGNN commutation are visualized in Fig.~\ref{fig:CommutationError_Ref1} for a reference with maximum velocity $0.025$~$\frac{m}{s}$ and in Fig.~\ref{fig:CommutationError_Ref2} for a reference with maximum velocity $0.15$~$\frac{m}{s}$. 
The commutation error is filtered as $H(z) \big( F(t) - F^*(t) \big)$, with $H(z)$ the zero--order hold discretization of $H(s) = \frac{1}{\frac{1}{2 \pi f_{\textup{lp}}} s + 1}$ and $f_{\textup{lp}} = 50$ $Hz$.
The low--pass filter removes part of the high frequent noise present in the force measurements. 
For both references, the PGNN significantly outperforms the classical commutation, predominantly in driving direction. 
Table~\ref{tab:Commutation_Performance} gives the mean--squared error (MSE) of the filtered (and not filtered) commutation error resulting from the long reference trajectory $y^*(t)$ in Fig.~\ref{fig:Data_Generation}.
This confirms the superior performance of the PGNN commutation. 

Typically, the objective is to minimize the position tracking error $y^*(t) - y(t)$ rather than minimizing the commutation error $F(t) - F^*(t)$. 
Therefore, Figs.~\ref{fig:CommutationError_Ref1} and~\ref{fig:CommutationError_Ref2} include the tracking error resulting from the closed--loop experiments with different commutation. 
The PGNN commutation significantly outperforms the classical commutation for both references. 
For the long reference $y^*(t)$ in Fig.~\ref{fig:Data_Generation}, the MSE of the position tracking error was $2.75 \cdot 10^{-8}$~$m^2$ for classical commutation, and $5.94 \cdot 10^{-9}$~$m^2$ for PGNN commutation.

\begin{remark}
	Fig.~\ref{fig:CommutationError_Ref1} indicates a direction dependency of the commutation error $F_y(t) - F_y^*(t)$. 
	This direction dependency can be largely identified by adding $C^l \textup{sign} \big( \frac{d}{dt} y(t) \big)$, $C^l \in \mathbb{R}^3$, to $\hat{F}_{\textup{cog}}$ in~\eqref{eq:GainMatrix_Cogging}, e.g., the identification cost function~\eqref{eq:Identification_CostFunction} in $y$ direction drops from $\frac{1}{2N} \sum_{t \in Z_{1,2}^l} \Big( F_y(t) - \hat{F}_y \big( \hat{\theta}^l, \hat{\phi}^l, y(t) \big) \Big)^2 = 3.52$ to $1.20$ $N^2$ for coil set $l=1$. 
	However, the resulting commutation algorithm~\eqref{eq:Commutation_Model_Currents} becomes dynamic, or requires measurements of $\frac{d}{dt} y(t)$ which are generally not available. 
	Moreover, issues can arise in the limiting case where $\frac{d}{dt} y(t) \approx 0$, especially when overestimating $\hat{C}^l$. 
	Therefore, compensating this direction dependency is done in the feedforward controller.
\end{remark}

\begin{table}
\centering
\caption{MSE of the filtered (not filtered) commutation error achieved by original (uncalibrated) commutation, classical commutation and PGNN commutation on the reference $y^*(t)$ in Fig.~\ref{fig:Data_Generation}.}
\label{tab:Commutation_Performance}
\begin{tabular}{c || c | c | c}
	 & $\mathbf{y}$ $[N^2]$ & $\mathbf{x}$ $[N^2]$ & $\mathbf{z}$ $[N^2m^2]$ \\ \hline \hline 
	(Original) & $52.8$ ($53.4$) & $0.111$ ($0.234$) & $0.327$ ($0.781$) \\ \hline
	{Classical} & $27.4$ ($28.1$) & $0.141$ ($0.307$) & $0.436$ ($1.13$) \\ \hline
	{PGNN} & $2.45$ ($3.08$) & $0.056$ ($0.223$) & $0.180$ ($0.930$) \\ \hline
\end{tabular}
\end{table}

\section{Conclusions}
\label{sec:Conclusions}
In this work, we developed a data--driven approach to electromagnetic commutation using force sensors to minimize the commutation error, i.e., the difference between the desired and the achieved force.
A structured PGNN model class has been proposed for identification of the electromagnetic part to retain physical interpretability, while enhancing the flexibility of the model to learn parasitic effects.
Afterwards, the PGNN commutation is derived by analytically inverting the identified PGNN model while minimizing energy consumption. 
Practical implementability is ensured by deriving an input transformation for the case when the currents cannot be described directly. 
The developed approach is validated on a real--life industrial coreless linear motor, where it demonstrates significant improvements with respect to conventional commutation. 
The improvements are also observed in the reduced reference position tracking error. 

Future research will focus on merging the commutation algorithm with the feedforward control design in the position control loop, as well as the use of position measurements to remove the need for force measurements for commutation in driving direction.

\bibliographystyle{elsarticle-num} 
\bibliography{References}

\begin{thebibliography}{10}
\expandafter\ifx\csname url\endcsname\relax
  \def\url#1{\texttt{#1}}\fi
\expandafter\ifx\csname urlprefix\endcsname\relax\def\urlprefix{URL }\fi
\expandafter\ifx\csname href\endcsname\relax
  \def\href#1#2{#2} \def\path#1{#1}\fi

\bibitem{Schmidt2014}
R.~M. Schmidt, G.~Schitter, A.~Rankers, J.~van Eijk, The design of high
  performance mechatronics, {IOS} Press, 2014.

\bibitem{Nguyen2018}
T.~T. Nguyen, Identification and compensation of parasitic effects in coreless
  linear motors, Ph.D. thesis, Eindhoven University of Technology, The
  Netherlands (2018).

\bibitem{Samuelsson2005}
P.~Samuelsson, H.~Norlander, B.~Carlsson, An integrating linearization method
  for {H}ammerstein models, Automatica 41~(10) (2005) 1825--1828.

\bibitem{Johansen2013}
T.~A. Johansen, T.~I. Fosse, Control allocation -- a survey, Automatica 49~(5)
  (2013) 1087--1103.

\bibitem{Rohrig2003}
C.~R\"ohrig, Current waveform optimization for force ripple compensation of
  linear synchronous motors, {IEEE} Conference on Decision and Control (2003)
  5891--5896.

\bibitem{Moehle2015}
N.~Moehle, S.~Boyd, Optimal current waveforms for brushless permanent magnet
  motors, International Journal of Control 8~(7) (2015) 1389--1399.

\bibitem{Nguyen2016}
T.~T. Nguyen, H.~Butler, M.~Lazar, An analytical commutation law for parasitic
  forces and torques compensation in coreless linear motors, European Control
  Conference (2016) 2386--2391.

\bibitem{Custers2019}
C.~Custers, I.~Proimadis, J.~Jansen, H.~Butler, R.~T\'oth, E.~Lomonova,
  P.~van~den Hof, Active compensation of the deformation of a magnetically
  levitated mover of a planar motor, {IEEE} International Electric Machines \&
  Drives Conference (2019) 854--861.

\bibitem{Gieras2011}
J.~F. Gieras, Z.~J. Piech, B.~Tomczuk, Linear synchronous motors:
  transportation and automation systems, 2nd Edition, CRC Press, Boca Raton,
  2011.

\bibitem{Overboom2015}
T.~T. Overboom, Electromagnetic levitation and propulsion: force and torque
  decoupling in a planar motor with magnetic suspension and fail--safety, Ph.D.
  thesis, Eindhoven University of Technology, The Netherlands (2015).

\bibitem{Broens2023}
Y.~Broens, H.~Butler, R.~T\'oth, On improved commutation for moving--magnet
  planar actuators, {IEEE} Control Systems Letters 7 (2023) 2593--2598.

\bibitem{Strijbosch2019}
N.~Strijbosch, P.~Tacx, E.~Verschueren, T.~Oomen, Commutation angle iterative
  learning control: Enhancing piezo--stepper actuator waveforms, {IFAC}
  PapersOnline 52~(15) (2019) 579--584.

\bibitem{Aarnoudse2023}
L.~Aarnoudse, N.~Strijbosch, P.~Tacx, E.~Verschueren, T.~Oomen, Compensating
  commutation--angle domain disturbances with application to waveform
  optimization of piezo stepper actuators, Mechatronics 94 (2023) 103016.

\bibitem{Bascetta2010}
L.~Bascetta, P.~Rocco, G.~Magnani, Force ripple compensation in linear motors
  based on closed--loop position--dependent identification, {IEEE} Transactions
  on Mechatronics 15~(3) (2010) 349--359.

\bibitem{Rohrig2008}
C.~R{\"o}hrig, Force ripple compensation of linear synchronous motors, Asian
  Journal of Control 7~(1) (2008) 1--11.

\bibitem{Meer2022}
M.~van Meer, G.~Witvoet, T.~Oomen, Optimal commutation for switched reluctance
  motors using {G}aussian process regression, {IFAC}--PapersOnline 55~(37)
  (2022) 302--307.

\bibitem{Bolderman2021}
M.~Bolderman, M.~Lazar, H.~Butler, Physics--guided neural networks for
  inversion--based feedforward control applied to linear motors, {IEEE}
  Conference on Control Technology and Applications (2021) 1115--1120.

\bibitem{Bolderman2024}
M.~Bolderman, H.~Butler, S.~Koekebakker, E.~van Horssen, R.~Kamidin,
  T.~Spaan-Burke, N.~Strijbosch, M.~Lazar, Physics--guided neural networks for
  feedforward control with input--to--state stability guarantees, Control
  Engineering Practice 145 (2024) 105851.

\bibitem{Rovers2002}
A.~F. Rovers, Controlling the {H}--drive, Tech. rep., Eindhoven University of
  Technology, The Netherlands (2002).

\end{thebibliography}

\end{document}